\newtheorem{theorem}{Theorem}[section] 
\newtheorem{proposition}[theorem]{Proposition} 
\newtheorem{lemma}[theorem]{Lemma}
\title{Generative AI as a Non-Convex Supply Shock: Market Bifurcation and Welfare Analysis}
\author{Yukun Zhang \\
The Chinese University Of Hongkong\\
HongKong, China \\
\texttt{215010026@link.cuhk.edu.cn} \\
\And
Tianyang Zhang \\
University of Bologna \\
Bologna, Italy \\
\texttt{tianyang.zhang@studio.unibo.it} \\
}
\begin{document}

\ifcolmsubmission
\linenumbers
\fi

\maketitle
\begin{abstract}
The diffusion of Generative AI (GenAI) constitutes a supply shock of a fundamentally different nature: while marginal production costs approach zero, content generation creates congestion externalities through information pollution. We develop a three-layer general equilibrium framework to study how this non-convex technology reshapes market structure, transition dynamics, and social welfare. In a static vertical differentiation model, we show that the GenAI cost shock induces a kinked production frontier that bifurcates the market into exit, AI, and human segments, generating a ``middle-class hollow'' in the quality distribution. To analyze adjustment paths, we embed this structure in a mean-field evolutionary system and a calibrated agent-based model with bounded rationality. The transition to the AI-integrated equilibrium is non-monotonic: rather than smooth diffusion, the economy experiences a temporary ecological collapse driven by search frictions and delayed skill adaptation, followed by selective recovery. Survival depends on asymmetric skill reconfiguration, whereby humans retreat from technical execution toward semantic creativity. Finally, we show that the welfare impact of AI adoption is highly sensitive to pollution intensity: low congestion yields monotonic welfare gains, whereas high pollution produces an inverted-U relationship in which further AI expansion reduces total welfare. These results imply that laissez-faire adoption can be inefficient and that optimal governance must shift from input regulation toward output-side congestion management.
\end{abstract}



\section{Introduction} \label{sec:intro}

The emergence of Generative Artificial Intelligence (GenAI) represents a supply shock of historical magnitude, comparable to the decoupling of physical labor from muscle power during the Industrial Revolution. By reducing the marginal cost of semantic generation to near-zero levels, Large Language Models (LLMs) promise a democratization of creativity and a surge in aggregate productivity \citep{brynjolfsson2025generative}. However, the prevailing techno-optimist narrative often neglects a critical general equilibrium effect: the congestion externality. In a digital economy where human attention is the scarce constraint, an exponential increase in synthetic content does not merely reduce prices; it fundamentally alters the signal-to-noise ratio, creating what we term "Information Pollution."

The central tension of the GenAI era is therefore not scarcity, but discoverability. As the cost of producing "plausible but mediocre" content vanishes, platforms face a flood of synthetic noise that degrades the matching efficiency between consumers and high-quality creators. Existing literature has largely analyzed these shifts in isolation—focusing either on the labor displacement effects \citep{acemoglu2020robots} or the algorithmic dynamics of recommendation systems \citep{lian2021optimal}. There remains a lack of unified theoretical frameworks that rigorously model how these forces interact dynamically to reshape market structures and social welfare.

To bridge this gap, this paper develops a Three-Layer General Equilibrium Framework to analyze the displacement effects and transitional dynamics of the GenAI supply shock. Unlike standard models that assume convex production frontiers and perfect information, our framework explicitly incorporates a non-convex technological choice and an endogenous pollution penalty.

Methodologically, we proceed in three steps. First, we construct a static model of vertical differentiation to characterize the asymptotic equilibrium. Second, to understand the path dependence, we model the skill evolution of creators as a Mean Field Game governed by a Fokker-Planck equation. Third, to capture heterogeneity and bounded rationality, we instantiate these dynamics in a large-scale Agent-Based Model (ABM) calibrated to the "Goldilocks" regime of late 2025—a state where human survival is possible but precarious.

Our analysis yields three distinct, counter-intuitive findings:

The "Middle-Class Hollow": We demonstrate that GenAI acts as a non-convex technology that does not simply compete with humans but bifurcates the market. The equilibrium is characterized by a "barbell" structure: low-end demand is completely captured by AI anchors at marginal cost, while human creators are forced into a high-premium, high-complexity niche. The mid-tier of the skill distribution—the traditional "middle class" of the creative economy—evaporates.

The "Shock Therapy" Transition: Contrary to the smooth S-curve adoption models typically assumed in innovation diffusion theory, our dynamic simulations reveal a non-monotonic transition. The market experiences a temporary "Ecological Collapse"—a valley of death where the income shock from AI adoption outpaces the rate of human skill reconfiguration. This suggests that without intervention, the transition to an AI-integrated economy involves significant frictional costs and a temporary breakdown in supply-side coordination.

Asymmetric Skill Reconfiguration: We identify the specific mechanism of human survival. In the "Polluted Equilibrium," survival depends on an orthogonal shift in capabilities. Agents who attempt to compete on technical efficiency (syntax, structure) are eliminated, while those who pivot toward semantic creativity (nuance, intent) survive. This validates the hypothesis that AI commoditizes execution while increasing the premium on intent.

Finally, we discuss the normative implications of these findings. We argue that the governance of GenAI must shift its focus from input-side regulation (copyright and data ownership) to output-side congestion management. We derive an optimal Pigouvian tax on algorithmic volume that internalizes the pollution externality, restoring Pareto efficiency in a noise-saturated market.

The remainder of this paper is organized as follows. Section \ref{sec:theory} derives the theoretical conditions for market bifurcation. Section \ref{sec:setup} details the "Goldilocks" ABM environment. Section \ref{sec:results} presents the simulation results regarding the "Shock Therapy" phenomenon. Section \ref{sec:discussion} outlines the policy shift from intellectual property to information pollution control.

\begin{figure*}[htbp]
    \centering
    \includegraphics[width=\textwidth]{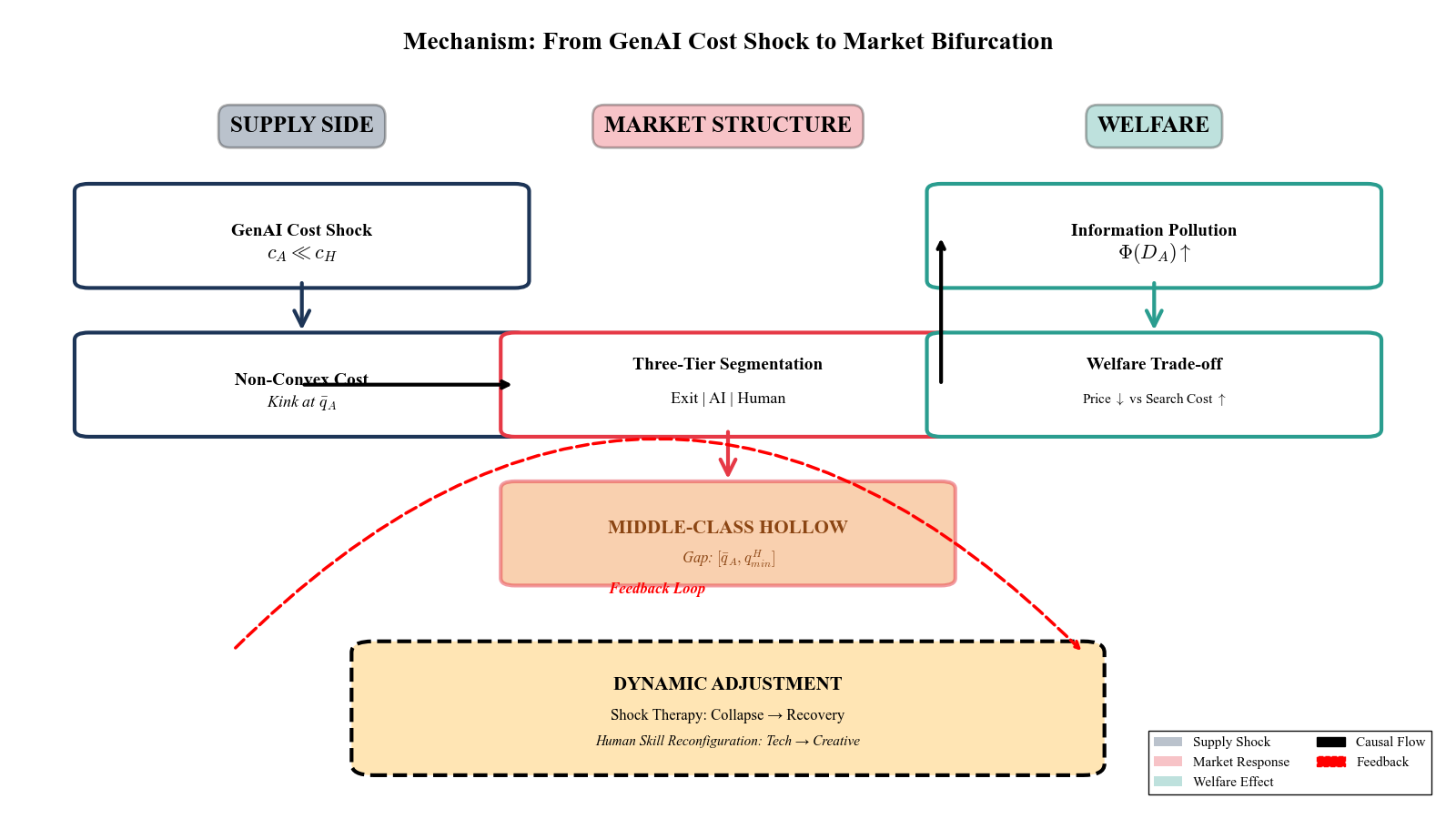}
    \caption{
    \textbf{Mechanism: From GenAI cost shock to market bifurcation and welfare trade-offs.}
    A GenAI marginal-cost shock ($c_A \ll c_H$) induces a non-convex production frontier (kink at the AI quality cap $\bar{q}_A$), generating three-tier segmentation (Exit / AI / Human) and a ``middle-class hollow'' gap in the quality distribution. The expansion of AI volume increases information pollution $\Phi(D_A)$, creating a welfare trade-off between lower prices and higher search costs. These forces feed back into the transition dynamics, producing ``shock therapy'' (collapse $\rightarrow$ recovery) and asymmetric skill reconfiguration (Tech $\rightarrow$ Creative).
    }
    \label{fig:mechanism_overview}
\end{figure*}

\section{Literature Review}

The integration of Generative AI into the digital economy fundamentally alters platform mechanics, necessitating a re-examination of established economic theories. This review synthesizes three distinct strands of literature—platform economics, information dynamics, and social welfare—to contextualize the unique supply shocks and congestion externalities introduced by GenAI.

\subsection{Generative AI in Two-Sided Markets} Classical two-sided market theory focuses on how platforms internalize cross-group externalities through optimal pricing structures and design choices \citep{rysmann2009economics}. While foundational, this literature has evolved to address friction: recent studies extend these models to incorporate congestion effects, strategic pricing behaviors, and dynamic equilibrium adjustments under capacity constraints \citep{bernstein2021competition, lian2021optimal}.

Generative AI represents a structural break in these dynamics. On the supply side, it precipitates a collapse in marginal production costs, drastically lowering entry barriers for content creation \citep{varian2018artificial}. On the demand side, it reshapes consumption through hyper-personalized algorithmic curation, theoretically amplifying engagement \citep{werthner2024introduction, hassan2025moderating}. However, while these shifts strengthen network effects, they simultaneously risk market saturation and quality dilution. Current literature lacks a unified equilibrium framework that explicitly models how this specific combination—near-zero production costs coupled with algorithmic intermediation—impacts platform stability.

\subsection{Information Overload and Distribution Dynamics} The sheer velocity of AI-generated content exacerbates the tension between digital abundance and human cognitive constraints. A substantial body of work documents the adverse effects of information overload, linking it to cognitive fatigue, reduced decision quality, and user attrition \citep{Eppler01112004, bawden2009dark, WANG2025114436}. Although personalized algorithms are deployed to mitigate this friction, evidence suggests they may inadvertently concentrate exposure on a narrow subset of popular items, reinforcing "winner-takes-all" dynamics \citep{viswanathan2017dynamics, ding2025unveiling}.

Conversely, proponents argue that AI-driven recommendation systems can revitalize the "Long Tail" by reducing discovery costs for niche content \citep{nderson2008chris, OLMEDILLA2019113120}. This unresolved duality—between algorithmic amplification of popular content and the potential democratization of niche visibility—remains a critical open question. Most existing studies examine these phenomena in isolation, failing to model the joint dynamics where supply-side pollution and demand-side filtering interact to determine market concentration.

\subsection{Welfare and Social Implications} The welfare implications of Generative AI are characterized by a sharp trade-off between aggregate efficiency and distributional equity. On one hand, GenAI significantly reduces search frictions and enhances productivity, thereby increasing consumer surplus and accelerating the diffusion of innovation \citep{goldfarb2019digital, brynjolfsson2025generative, cockburn2022impact}. On the other hand, these efficiency gains may come at the cost of intensified market concentration and the entrenchment of incumbent advantages, raising concerns about labor displacement and inequality \citep{furman2019ai, acemoglu2020robots}.

Further complexities arise from the externalities of synthesis, including copyright dilution \citep{gaffar2025copyright} and algorithmic biases in visibility allocation \citep{binns2018s}. While scholars agree on the necessity of regulatory frameworks that balance innovation incentives with fairness \citep{agrawal2019economic}, the current discourse remains largely descriptive. There is a marked absence of micro-founded, quantitative frameworks capable of rigorously assessing how specific platform governance mechanisms can internalize these new forms of externalities in the GenAI era.

\section{Theoretical Framework: The Analytical Anchor}
\label{sec:theory}

To analyze the general equilibrium effects of Generative AI, we construct a vertical differentiation model augmented with negative network externalities. Unlike standard models \citep{mussa1978monopoly} where supply is convex and information is perfect, we introduce two structural modifications: (1) an endogenous ``information pollution'' term in the consumer utility function, and (2) a non-convex production frontier representing the discrete technological choice between human labor and algorithmic generation.

\subsection{Demand Side: Endogenous Information Pollution}

Consider a continuum of consumers with measure normalized to 1, indexed by their taste for quality $\theta \sim U[0, \bar{\theta}]$. Consumers face a choice between purchasing a digital good of quality $q$ at price $p$ or selecting an outside option with utility normalized to zero.

We depart from canonical models by positing that the proliferation of synthetic content creates a congestion externality (e.g., search costs, cognitive load, or signal-to-noise degradation). Let $D_A$ denote the aggregate volume of AI-generated content in the market equilibrium. The indirect utility of a consumer of type $\theta$ is given by:

\begin{equation} \label{eq:utility}
    U(\theta, q, p, D_A) = \theta q - p - \underbrace{\beta \ln(1 + \eta D_A)}_{\Phi(D_A)}
\end{equation}

where $\Phi(D_A)$ represents the information pollution penalty. $\beta \ge 0$ measures the consumer's sensitivity to market noise, and $\eta > 0$ is a scaling parameter. Crucially, $D_A$ is an endogenous equilibrium object, defined as the total measure of firms adopting AI technology: $D_A = \int_{s \in \mathcal{S}_A} d\mu(s)$.

Let $\hat{\theta}$ denote the marginal consumer indifferent between purchasing and the outside option. The participation constraint implies $\hat{\theta} q - p - \Phi(D_A) = 0$. This yields a participation threshold $\hat{\theta} = \frac{p + \Phi(D_A)}{q}$.

\begin{figure*}[htbp]
    \centering
    \includegraphics[width=\textwidth]{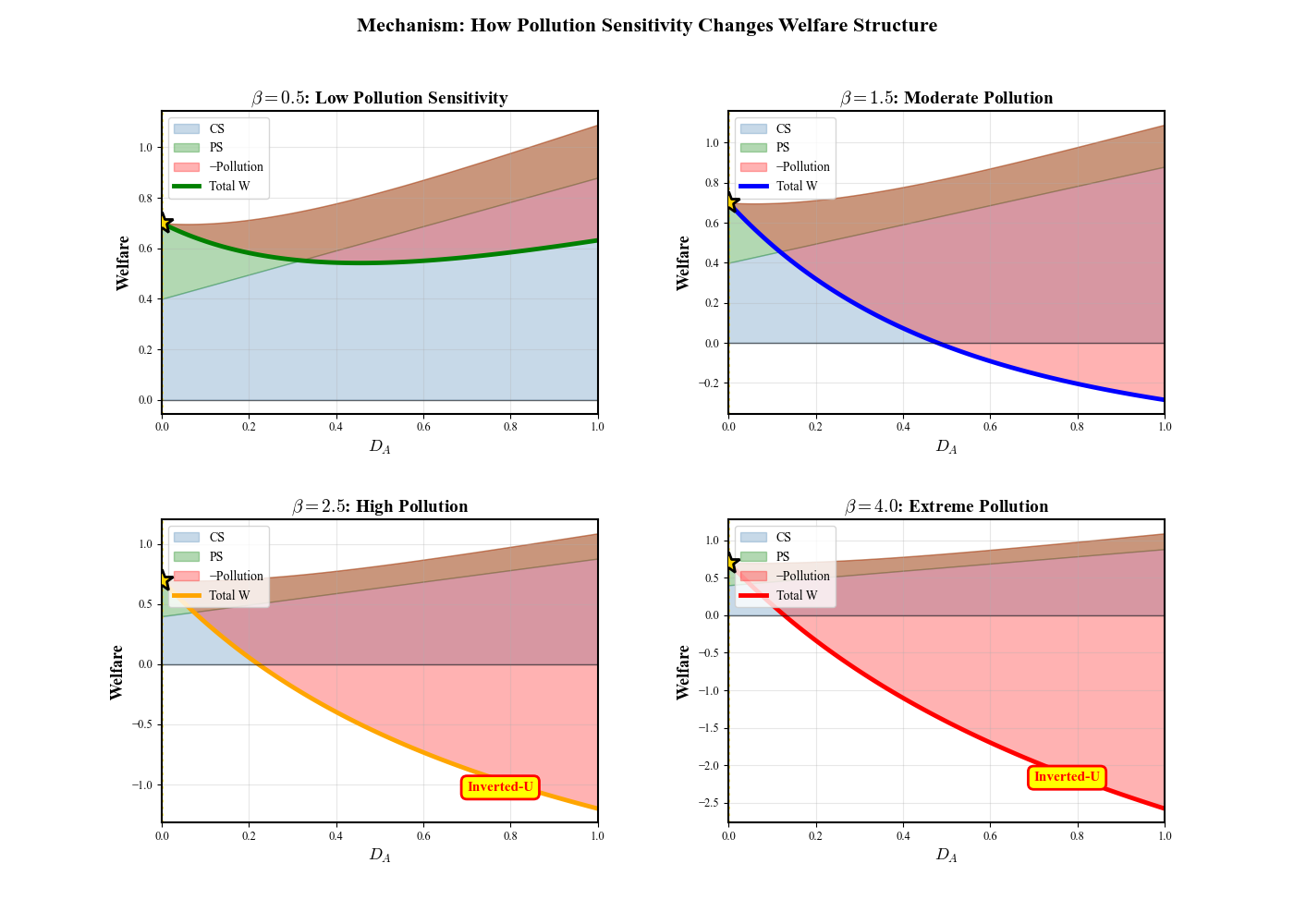}
    \caption{
    \textbf{Pollution sensitivity reshapes welfare structure.}
    Each panel plots welfare components as a function of AI penetration $D_A$ under different pollution sensitivity levels $\beta$.
    When pollution sensitivity is low ($\beta = 0.5$), total welfare increases monotonically with AI adoption.
    As $\beta$ rises, welfare becomes non-monotonic, exhibiting an inverted-U shape.
    At sufficiently high pollution sensitivity ($\beta \ge 2.5$), the marginal pollution cost dominates price and productivity gains, causing total welfare to decline sharply with further AI expansion.
    }
    \label{fig:welfare_pollution}
\end{figure*}

\begin{proposition}[Pollution-Induced Unraveling]
\label{prop:unraveling}
There exists a critical noise sensitivity threshold $\beta^*$. For all $\beta > \beta^*$, a technological shock that reduces the marginal cost of AI production ($c_A \to 0$) reduces total social welfare $W$, despite strictly lowering equilibrium prices.
\end{proposition}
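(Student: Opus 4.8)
The plan is to split total welfare into a gross-surplus term and an aggregate pollution term, and then exploit the fact that as $c_A \to 0$ the gross gain is bounded by a constant that does not depend on $\beta$, while the pollution cost grows linearly in $\beta$. Since prices are pure transfers, write $W = G - \Phi(D_A)\,M$, where $G = \int_{\theta \ge \hat\theta}\bigl(\theta q(\theta) - c(\theta)\bigr)\,d\theta$ is the pre-pollution surplus of all executed trades, $M$ is the measure of participating consumers, and $\Phi(D_A) = \beta\ln(1+\eta D_A)$ is the common noise wedge every participant absorbs. The pre-shock benchmark is the (pre-GenAI) economy in which AI is effectively absent, $D_A \approx 0$, so $\Phi \approx 0$ and $W$ equals the classical Mussa--Rosen surplus $W_0 > 0$ — a quantity that does not involve $\beta$.

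First I would pin down the comparative statics of the equilibrium along $c_A \to 0$, using the kinked production frontier from Section~\ref{sec:theory}. Three facts are needed: (i) equilibrium prices fall strictly, since AI entrants price near their vanishing marginal cost against the outside option and the human segment — this delivers the ``despite lowering prices'' clause; (ii) AI adoption rises to a post-shock mass $D_A^{\dagger}$ bounded strictly away from zero, because even free content profitably captures low-$\theta$ demand as long as \emph{any} consumers participate; and (iii) the gross-surplus gain $G^{\dagger} - W_0$ is bounded above by a constant $\bar G$ depending only on $\bar\theta$, the human technology, and the AI quality cap $\bar q_A$, not on $\beta$. Point (iii) is exactly where the non-convexity bites: because AI quality is capped at $\bar q_A$, the efficiency that near-zero-cost AI can inject is itself capped.

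Next I would bound the pollution term from below. In the regime where the market does not unravel completely, $M^{\dagger} \ge \underline M > 0$ and $D_A^{\dagger} \ge \underline D > 0$, so the aggregate pollution cost satisfies $\Phi(D_A^{\dagger})\,M^{\dagger} \ge \beta\ln(1+\eta\underline D)\,\underline M$, which is linear and strictly increasing in $\beta$. In the complementary regime — $\beta$ so large that the noise wedge drives participation to (near) zero — the gross surplus $G^{\dagger}$ collapses toward zero and thus falls below $W_0$. Either way, $W^{\dagger} = G^{\dagger} - \Phi(D_A^{\dagger})M^{\dagger} < W_0$ once $\beta$ is large enough; setting $\beta^{*}$ to the infimum of $\beta$ for which this inequality holds gives the threshold, and monotonicity of the wedge in $\beta$ propagates the conclusion to all $\beta > \beta^{*}$.

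The main obstacle is the fixed-point interaction between $D_A$ and $M$: because $\Phi(D_A)$ enters the participation threshold $\hat\theta = (p+\Phi)/q$, the equilibrium AI mass and the participating measure are jointly determined, and I must show this fixed point exists, selects a well-defined $D_A^{\dagger}(\beta)$, and does not let the pollution cost decay faster than the bounded gains as $\beta$ grows. I would address this by a monotone-comparative-statics argument — the best-response map in $D_A$ is monotone, so Tarski's theorem gives existence — together with the case split above. The delicate step is verifying that in the surviving regime $D_A^{\dagger}$ stays bounded below, i.e., that residual low-$\theta$ demand always sustains a positive mass of AI supply; everything else is routine algebra on the thresholds and the surplus integrals.
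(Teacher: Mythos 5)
Your route is genuinely different from the paper's: the paper argues locally, differentiating a high-type consumer's utility with respect to $c_A$ and noting that the pollution term $\tfrac{\partial\Phi}{\partial D_A}\tfrac{\partial D_A}{\partial c_A}$ scales with $\beta$ while the price-passthrough term does not, so that for large $\beta$ high-$\theta$ consumers exit and the deadweight loss at the top swamps the surplus gain at the bottom. You instead make a global pre/post comparison via the decomposition $W = G - \Phi(D_A)M$, bounding the gross gain by a $\beta$-independent constant and the pollution cost from below by something linear in $\beta$. That decomposition is clean and the "bounded gains versus growing losses" logic would be an improvement on the paper's sketch if the lower bound held.

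The gap is precisely in that lower bound. You need $\Phi(D_A^{\dagger})M^{\dagger} \ge \beta\ln(1+\eta\underline D)\,\underline M$ with $\underline D$ and $\underline M$ uniform in $\beta$, but $D_A^{\dagger}$ is an equilibrium object that itself shrinks as $\beta$ grows: any firm adopting AI must earn revenue, which requires a nonempty AI segment, which requires $\hat\theta_{out} = (p_A + \beta\ln(1+\eta D_A))/q_A < \bar\theta$, i.e.\ $D_A^{\dagger} = O(1/\beta)$ as $\beta \to \infty$. Consequently $\Phi(D_A^{\dagger}(\beta))$ is capped by roughly $\bar\theta \bar q_A - p_A$ rather than growing linearly in $\beta$, and your main inequality does not follow. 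Your case split does not rescue this, because the complementary regime is misidentified: when the fixed point resolves high $\beta$ by driving $D_A^{\dagger}$ toward zero, the pollution wedge vanishes and the economy reverts toward the pre-shock allocation with weakly lower prices, so $G^{\dagger}$ tends to $W_0$ from near-above rather than "collapsing toward zero" --- welfare need not fall at all in that branch. To close the argument you would either need a supply-side condition (e.g.\ $\kappa$ small enough relative to residual low-$\theta$ demand) guaranteeing $D_A^{\dagger}$ bounded away from zero uniformly in $\beta$, or you would need to show that the capped pollution level $\Phi(D_A^{\dagger}(\beta))$, though bounded, is still increasing in $\beta$ and eventually exceeds the bounded gross gain --- which is closer in spirit to the paper's marginal argument about high-type exit.
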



\begin{proof}
Total Welfare $W$ is the sum of Consumer Surplus ($CS$) and Producer Surplus ($PS$). Consider the limiting case where perfect competition in the AI sector drives prices to marginal cost, $p_A \to c_A$. As $c_A \to 0$, demand for AI content expands, implying $D_A \to D_{max}$.
The pollution cost $\Phi(D_A)$ is monotonically increasing in $D_A$. The net utility of a high-type consumer consuming high-quality human goods ($q_H, p_H$) is $U_H = \theta q_H - p_H - \Phi(D_A)$.
Differentiating with respect to $c_A$:
$$ \frac{\partial U_H}{\partial c_A} = -\frac{\partial p_H}{\partial c_A} - \frac{\partial \Phi}{\partial D_A} \frac{\partial D_A}{\partial c_A} $$
The first term is positive (strategic complementarity lowers human prices), but the second term is negative (pollution increases). If $\beta$ is sufficiently large, the marginal pollution cost dominates the price effect. Consequently, high-$\theta$ consumers may violate the participation constraint ($U_H < 0$) and exit the market. This contraction of the extensive margin at the top of the distribution creates a deadweight loss that exceeds the surplus gains at the bottom, leading to $\frac{dW}{dc_A} > 0$ (welfare falls as costs fall).
\end{proof}

\subsection{Supply Side: The Non-Convex Production Frontier}

On the supply side, we model the GenAI shock as a change in the topology of the production set. Creators differ in their initial skill endowment $s \in [0, \bar{s}]$, distributed with density $g(s)$. They minimize costs by choosing between two technologies $j \in \{H, A\}$:

1.  \textbf{Human Technology ($H$):} Convex costs $C_H(q) = \frac{1}{2}\gamma q^2$, representing cognitive fatigue. Quality is strictly increasing in skill, $q_H(s) = \alpha s$.
2.  \textbf{AI Technology ($A$):} Linear costs $C_A(q) = c_A q + \kappa$ for $q \le \bar{q}_A$, and infinite otherwise. AI separates marginal cost from skill, but imposes a quality ceiling $\bar{q}_A$.

The global cost function is the non-convex lower envelope:
\begin{equation}
    C(q) = \min \left\{ \frac{1}{2}\gamma q^2, \quad (c_A q + \kappa) \cdot \mathbb{I}_{\{q \le \bar{q}_A\}} + \infty \right\}
\end{equation}

Let $p(q)$ be the hedonic price schedule. The profit function for a creator of skill $s$ is $\pi(s) = \max_{q} \{ p(q) - C(q) \}$.

\begin{lemma}[The Middle-Class Hollow]
\label{lemma:hollow}
If $c_A$ is sufficiently small relative to $\gamma$, the profit function $\pi(s)$ becomes strictly convex-concave, such that the set of skills $S_{mid}$ producing quality in the neighborhood $(\bar{q}_A, \bar{q}_A + \epsilon)$ is empty.
\end{lemma}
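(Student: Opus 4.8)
The plan is to turn the AI quality ceiling $\bar{q}_A$ into a genuine discontinuity of the creator's objective and show that this discontinuity forbids any optimal quality choice in a right-neighborhood of $\bar{q}_A$. First I would make the hypothesis ``$c_A$ small relative to $\gamma$'' precise as $c_A \bar{q}_A + \kappa < \frac{1}{2}\gamma \bar{q}_A^2$, i.e. $c_A < \frac{1}{2}\gamma\bar{q}_A - \kappa/\bar{q}_A$. Under this condition the effective (lower-envelope) cost of delivering quality $q$ equals $C(q)=\min\{\frac{1}{2}\gamma q^2,\, c_A q+\kappa\}$ for $q\le\bar{q}_A$ but collapses to the human-only branch $C(q)=\frac{1}{2}\gamma q^2$ for $q>\bar{q}_A$; hence $C$ jumps \emph{upward} at $\bar{q}_A$ by the strictly positive amount $\Delta := \frac{1}{2}\gamma\bar{q}_A^2-(c_A\bar{q}_A+\kappa)>0$, with left value $C(\bar{q}_A^-)=c_A\bar{q}_A+\kappa$. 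The crucial robustness point is that $q=\bar{q}_A$ is attainable by \emph{every} creator through the AI technology at cost $c_A\bar{q}_A+\kappa$, so the value $\Pi(\bar{q}_A):=p(\bar{q}_A)-c_A\bar{q}_A-\kappa$ is always an available option in the profit problem $\pi(s)=\max_q\{p(q)-C(q)\}$.

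Second, I would bound the objective just above the cap. Writing $\Pi(q)=p(q)-C(q)$, for any $q\in(\bar{q}_A,\bar{q}_A+\epsilon)$ only the human branch is feasible, so using a local Lipschitz bound $L$ on the hedonic schedule near $\bar{q}_A$,
\[
\Pi(q)=p(q)-\tfrac{1}{2}\gamma q^2 \;\le\; p(\bar{q}_A)+L\epsilon-\tfrac{1}{2}\gamma\bar{q}_A^2 \;=\; \Pi(\bar{q}_A)+L\epsilon-\Delta .
\]
Choosing $\epsilon<\Delta/L$ makes the right-hand side strictly less than $\Pi(\bar{q}_A)$, so every $q$ in the interval is strictly dominated by producing at $\bar{q}_A$ with AI; consequently $q^{*}(s)\notin(\bar{q}_A,\bar{q}_A+\epsilon)$ for all $s$, which is exactly $S_{mid}=\emptyset$. (For creators with $\alpha s\le\bar{q}_A$ the statement is vacuous since such qualities are infeasible for them; for $\alpha s>\bar{q}_A$ it follows from the domination argument.) I would also remark that if $\mathrm{supp}\,g$ extends far enough and $p(\cdot)$ rewards top quality sufficiently, some high-skill types optimally choose $q>\bar{q}_A+\epsilon$, so the empty set genuinely sits \emph{between} two populated segments — the ``hollow.''

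Third, for the stated convex--concave profile of $\pi(s)$, I would express it as the upper envelope $\pi(s)=\max\{\bar\pi_A,\ \pi_H(s)\}$, where $\bar\pi_A:=\max_{q\le\bar{q}_A}\{p(q)-c_A q-\kappa\}$ is independent of $s$ and $\pi_H(s):=p(\alpha s)-\frac{1}{2}\gamma\alpha^2 s^2$. On the low-skill range $\pi\equiv\bar\pi_A$ is flat (degenerate convex); on the high-skill range the envelope/derivative calculation $\pi_H'(s)=\alpha p'(\alpha s)-\gamma\alpha^2 s$ is increasing while the marginal-price term dominates and eventually decreasing once the quadratic fatigue term prevails, giving the convex-then-concave stretch, with a single crossing skill $s_c$ solving $\pi_H(s_c)=\bar\pi_A$ where the kink occurs. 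Combined with Step~2, optimal quality therefore jumps from at most $\bar{q}_A$ to a value bounded away from $\bar{q}_A$ as $s$ passes $s_c$.

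The main obstacle is the regularity of the \emph{endogenous} price schedule $p(\cdot)$ near $\bar{q}_A$: the whole argument rests on $p$ being continuous with a slope bound $L$ that is uniform on a neighborhood of $\bar{q}_A$, so that the fixed cost jump $\Delta$ can be made to dominate the price gain $p(q)-p(\bar{q}_A)$ on a short interval. I would close this either by invoking the linear-in-quality structure of consumer utility from Section~3.1 (competitive hedonic pricing forces $p'(q)\le\bar\theta$, so one may take $L=\bar\theta$), or by carrying the proof under an explicit maintained assumption that $p$ is locally Lipschitz, to be verified in the full equilibrium construction. A secondary check is existence of the maximizer so that the claim is non-vacuous: coercivity of $\frac{1}{2}\gamma q^2$ against an at-most-linear $p$ controls the human branch, and compactness of $[0,\bar{q}_A]$ controls the AI branch.
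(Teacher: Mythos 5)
Your proposal is correct and rests on the same economic mechanism as the paper's proof --- the strict cost gap at the cap, $\frac{1}{2}\gamma\bar{q}_A^2 \gg c_A\bar{q}_A+\kappa$, which makes human production of qualities near $\bar{q}_A$ dominated by AI production at $\bar{q}_A$ --- but the execution differs in a useful way. The paper argues in skill space: it fixes the skill level $s_{mid}$ at which a human would naturally produce $\bar{q}_A$, shows $\pi_H(s_{mid})<\Pi_A$, and invokes the Intermediate Value Theorem on the convex $\pi_H$ to locate a crossing $s_H>s_{mid}$, so the hollow is the implicit interval $(\bar{q}_A, q_H(s_H))$. You argue in quality space: you quantify the hypothesis as $c_A\bar{q}_A+\kappa<\frac{1}{2}\gamma\bar{q}_A^2$, extract the upward jump $\Delta$ of the lower-envelope cost at $\bar{q}_A$, and show every $q\in(\bar{q}_A,\bar{q}_A+\epsilon)$ is strictly dominated whenever $\epsilon<\Delta/L$, where $L$ is a local slope bound on $p$. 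This buys you two things the paper does not supply: an explicit width for the hollow neighborhood (the lemma as stated is about a quality interval, so your version matches the statement more directly than the paper's skill-space crossing), and an identification of the one regularity assumption the argument actually needs --- local Lipschitzness of the endogenous hedonic schedule, which you plausibly close via $p'(q)\le\bar{\theta}$ from the linear consumer utility. One caveat: your claim that $\pi_H$ is convex-then-concave differs from the paper's assertion that $\pi_H$ is strictly convex throughout; both hinge on unstated assumptions about $p''$ relative to $\gamma$, and neither is fully justified, but this does not affect the emptiness of $S_{mid}$, which follows from the domination step alone.
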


\begin{figure*}[htbp]
    \centering
    \includegraphics[width=\textwidth]{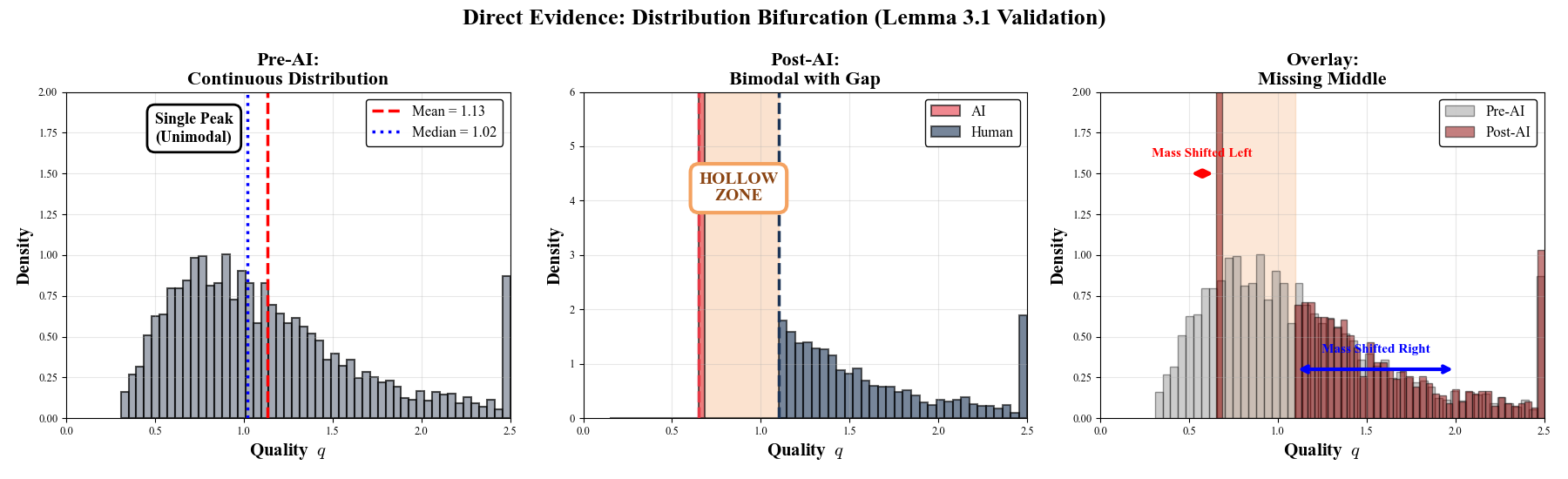}
    \caption{
    \textbf{Direct evidence of distribution bifurcation (validation of Lemma~\ref{lemma:hollow}).}
    Pre-AI, the quality distribution is unimodal and continuous. Post-AI, the supply distribution becomes bimodal with a ``hollow zone'' (missing middle) around the AI quality cap region, consistent with the non-convex production frontier and the predicted disappearance of mid-tier original content.
    }
    \label{fig:lemma31_bifurcation}
\end{figure*}

\begin{proof}
Define $\pi_A$ as the profit from optimal AI adoption (constant or weakly increasing in $s$) and $\pi_H(s)$ as the profit from human production (strictly convex in $s$ due to $C_H$). The global profit is $\pi(s) = \max \{ \pi_A, \pi_H(s) \}$.
At the switching point $s^*$, $\pi_A = \pi_H(s^*)$. Due to the discontinuity in marginal costs ($MC_A \approx 0 \ll MC_H$), the slope of the profit function changes discretely. Specifically, for intermediate skills where optimal human quality would be slightly above $\bar{q}_A$, the cost penalty of human production $\frac{1}{2}\gamma q^2$ exceeds the quality premium. Thus, agents in the interval $(s_L, s_H)$ rationally pool at $\bar{q}_A$ or exit, leaving a gap in the quality spectrum just above $\bar{q}_A$.
\end{proof}

\begin{proposition}[Market Segmentation]
\label{prop:segmentation}
Assume the pollution term $\Phi(D_A)$ is additive. Then, the Single-Crossing Property holds, and the market equilibrium is characterized by unique cutoffs $s_L$ and $s_H$ such that:
(i) $s < s_L$: Exit; (ii) $s_L \le s < s_H$: Adopt AI; (iii) $s \ge s_H$: Human Production.
\end{proposition}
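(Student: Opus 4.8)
The plan is to treat this as a standard occupational-choice argument with a fixed hedonic price schedule, and to verify the Spence--Mirrlees single-crossing structure in the pair (skill $s$, discrete technology $j$). First I would fix an equilibrium, hence a hedonic schedule $p(\cdot)$ and an equilibrium pollution level $\Phi(D_A)$, and define three option-value functions: the exit value $V_E(s)=0$; the AI value $V_A(s)=\sup_{q\le\bar{q}_A}\{p(q)-c_Aq-\kappa\}$; and the human value $V_H(s)=\sup_{q\le\alpha s}\{p(q)-\tfrac12\gamma q^2\}$. A creator of skill $s$ then solves $\pi(s)=\max\{V_E(s),V_A(s),V_H(s)\}$. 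The role of the additivity hypothesis on $\Phi$ is precisely to make this decomposition legitimate: since $\beta\ln(1+\eta D_A)$ is subtracted uniformly from every consumer's indirect utility, it shifts the participation margin $\hat\theta=(p+\Phi)/q$ but does not tilt willingness-to-pay across qualities, so the equilibrium schedule $p(q;D_A)$ is well defined and, holding it fixed, the creator's payoff depends only on $s$ and the choice $j$; a multiplicative or quality-dependent pollution term would distort the schedule in a way that could break the monotonicity used below.

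Second I would pin down the shapes. $V_E$ is constant. $V_A$ is independent of $s$ (or, in the reading where prompt/curation effort lets a low-skill adopter climb toward the cap, weakly increasing and then flat once $q=\bar{q}_A$ binds), so its slope is bounded and it changes sign at most once in $s$; the exit tier is non-degenerate exactly when $V_A$ is strictly increasing near $s=0$, which happens if the fixed cost $\kappa$ is unrecovered at very low skill. $V_H$ is continuous, nondecreasing, and strictly convex on the range where the skill constraint $q\le\alpha s$ binds (this is the property invoked in the proof of Lemma~\ref{lemma:hollow}), with $V_H'(s)$ given by the envelope theorem as the marginal value of relaxing the skill constraint; this slope is small for small $s$ and grows without bound relative to $V_A'$ as $s\uparrow\bar{s}$, because human quality is uncapped whereas AI quality is stuck at $\bar{q}_A$. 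These are routine applications of the envelope theorem and convexity of $C_H$.

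Third comes the core step. The Single-Crossing Property is the claim that $s\mapsto V_H(s)-V_A(s)$ is strictly increasing on the relevant interval; its derivative is $V_H'(s)-V_A'(s)$, negative for small $s$ (human output is tiny and the flat, low-marginal-cost AI schedule dominates) and positive for large $s$ (the convex human cost is irrelevant at the top while AI is capped), and strict convexity of $V_H$ forces exactly one crossing, at a point $s_H$. Likewise $V_A(s)-V_E(s)=V_A(s)$ is monotone, hence changes sign at most once, at a point $s_L$. Setting $s_L=\inf\{s:V_A(s)\ge 0\}$ and $s_H=\inf\{s:V_H(s)\ge V_A(s)\}$, and invoking the hypothesis that $c_A$ is small relative to $\gamma$ (the same condition as in Lemma~\ref{lemma:hollow}), together with $\kappa$ small enough that $V_A$ eventually turns positive and $\bar{q}_A$ low enough that top creators strictly prefer human production, one obtains $0<s_L<s_H<\bar{s}$. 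Then: for $s<s_L$ we have $V_A(s)<0$ and, since $s<s_H$, also $V_H(s)<V_A(s)<0$, so exit is optimal; for $s_L\le s<s_H$ we have $V_A(s)\ge 0$ and $V_A(s)>V_H(s)$, so AI is optimal; for $s\ge s_H$ we have $V_H(s)\ge V_A(s)\ge 0$, so human production is optimal. Uniqueness of $s_L$ and $s_H$ is immediate from the two single crossings. I would then close the loop by observing that the induced adopting set $\mathcal{S}_A=[s_L,s_H)$ yields $D_A=\int_{s_L}^{s_H}g(s)\,ds$, which must coincide with the $D_A$ generating $\Phi$ — a fixed-point condition I take as given, since the proposition asserts only the \emph{form} of the equilibrium.

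The step I expect to be the main obstacle is guaranteeing that the two sign changes occur in the right order, $s_L<s_H$, and that no tier is skipped — in particular that a positive-measure set of creators strictly prefers AI to \emph{both} exit and human production rather than the economy jumping from Exit straight to Human. This is exactly where the non-convexity bites and where ``$c_A$ sufficiently small relative to $\gamma$'' is indispensable: it is what makes the affine AI cost curve lie strictly below the quadratic human curve over an entire middle band of qualities (up to the kink at $\bar{q}_A$), so that $V_A$ sits strictly above $V_H$ there and the AI-adoption interval is genuinely sandwiched rather than collapsing to a point. Some care is also needed at the ceiling $\bar{q}_A$, where $V_A$ has a kink and its supremum may be attained at the corner; I would treat $q=\bar{q}_A$ as a boundary case and check one-sided derivatives, which does not affect monotonicity of $V_H-V_A$ but must be done explicitly to locate $s_H$.
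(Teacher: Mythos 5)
Your proposal is correct and follows essentially the same route as the paper's producer-side argument: fix the hedonic schedule $p(\cdot)$ and the equilibrium pollution level, observe that additivity of $\Phi$ keeps the Spence--Mirrlees cross-partial equal to one (so sorting and the schedule are undistorted), and then obtain $s_H$ from the single crossing of a convex, increasing human value function with a flat AI value, and $s_L$ from the participation margin of the AI option. Where you add value is in two places where the paper's own argument is thin. First, the paper asserts both $\partial \pi_A/\partial s = 0$ and that $s_L$ is determined by $\pi_A(s_L)=0$; if $\pi_A$ is literally constant in $s$ that equation has no interior solution, and your remark that a non-degenerate exit tier requires $V_A$ to have some skill-dependence near $s=0$ (or an unrecovered fixed cost $\kappa$ there) is exactly the repair needed. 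Second, the paper never verifies the ordering $0 < s_L < s_H < \bar{s}$ or that the AI tier has positive measure; you correctly flag this as the load-bearing step and tie it to the ``$c_A$ small relative to $\gamma$'' condition of Lemma~\ref{lemma:hollow}. One further mismatch worth noting: the paper's appendix proof of Proposition~\ref{prop:segmentation} is written entirely in consumer-type space (thresholds $\hat{\theta}_{out}$ and $\hat{\theta}_{AH}$), whereas the statement concerns producer-skill cutoffs $s_L, s_H$; your argument addresses the statement as actually written, at the acknowledged cost of taking the fixed point in $D_A$ as given.
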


\begin{proof}
The cross-partial derivative of the utility function is $\frac{\partial^2 U}{\partial q \partial \theta} = 1 > 0$, since $\Phi(D_A)$ is independent of $q$ and $\theta$. This satisfies the Spence-Mirrlees condition, ensuring that consumers sort perfectly by type.
Similarly, for producers, the marginal return to skill in Human mode is $\frac{\partial \pi_H}{\partial s} = p'(q)\alpha - \gamma \alpha^2 s$, which increases in $s$ for convex prices. In AI mode, $\frac{\partial \pi_A}{\partial s} = 0$ (assuming AI standardizes output). Thus, $\pi_H(s)$ intersects $\pi_A$ from below exactly once at $s_H$. The exit threshold $s_L$ is determined by $\pi_A(s_L) = 0$. Uniqueness follows from the monotonicity of $\pi(s)$.
\end{proof}

\subsection{Dynamic Foundations: The Potential Game}

To bridge the static equilibrium analysis to the agent-based simulation, we formalize the market evolution as a Mean Field Game. Let $\mu_t(s)$ be the probability density of creators at time $t$. Agents update their skills/strategies via a noisy gradient ascent on the profit landscape.

The evolution of the distribution is governed by the Fokker-Planck equation:
\begin{equation} \label{eq:fokker_planck}
    \frac{\partial \mu_t(s)}{\partial t} = \nabla \cdot \left( \sigma \nabla \mu_t(s) - \mu_t(s) \nabla \pi(s, \mu_t) \right)
\end{equation}
where $\sigma$ represents the intensity of idiosyncratic shocks (bounded rationality).

\begin{theorem}[Local Stability via Free Energy]
\label{thm:stability}
The dynamics described by \eqref{eq:fokker_planck} constitute a gradient flow of the Free Energy functional $\mathcal{F}(\mu)$ with respect to the Wasserstein metric.
Define the Free Energy as:
\begin{equation}
    \mathcal{F}(\mu) = -\int \pi(s, \mu) \mu(s) ds + \sigma \int \mu(s) \ln \mu(s) ds
\end{equation}
Assuming the interaction effects (pollution) satisfy monotonicity, the time derivative $\frac{d}{dt}\mathcal{F}(\mu_t) \le 0$. Consequently, the system asymptotically converges to a stationary distribution $\mu^*(s) \propto \exp(\pi(s, \mu^*)/\sigma)$, which coincides with the Quantal Response Equilibrium (QRE) of the static game.
\end{theorem}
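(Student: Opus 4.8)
The plan is to recognize the evolution (\ref{eq:fokker_planck}) as a Wasserstein gradient flow of the functional $\mathcal{F}$ in the sense of Otto calculus (the JKO scheme), and then to run the standard entropy--dissipation argument, closing with a fixed-point identification of the stationary state with the QRE. The first step is to verify the gradient-flow structure by computing the first variation $\delta\mathcal{F}/\delta\mu$. The entropy term contributes $\sigma(\ln\mu(s)+1)$, whose spatial gradient is $\sigma\nabla\mu/\mu$, so that $\nabla\!\cdot\!\big(\mu\nabla(\sigma\ln\mu)\big)=\nabla\!\cdot\!(\sigma\nabla\mu)$, matching the diffusion term; the payoff term $-\int\pi(s,\mu)\mu(s)\,ds$ contributes $-\pi(s,\mu)$ precisely when the game is a \emph{potential game}, i.e.\ when the payoff field $\pi(\cdot,\mu)$ is the first variation of a functional of $\mu$. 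This is what the monotonicity hypothesis supplies: the only $\mu$-dependence enters through the scalar aggregate $D_A=\int_{\mathcal{S}_A}d\mu$, so the interaction is ``local in the aggregate'' and admits a potential. One then reads off $\nabla\!\cdot\!\big(\mu\nabla(\delta\mathcal{F}/\delta\mu)\big)=\nabla\!\cdot\!\big(\sigma\nabla\mu-\mu\nabla\pi\big)$, exactly the right-hand side of (\ref{eq:fokker_planck}), and I would impose no-flux (reflecting) boundary conditions at $s\in\{0,\bar s\}$ so that mass is conserved and the boundary terms below vanish.

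The analytic core is the dissipation identity: differentiating $\mathcal{F}$ along the flow and integrating by parts yields $\frac{d}{dt}\mathcal{F}(\mu_t)=\int\frac{\delta\mathcal{F}}{\delta\mu}\,\partial_t\mu_t\,ds=-\int\mu_t\,\big|\nabla_s\frac{\delta\mathcal{F}}{\delta\mu}\big|^{2}\,ds\le 0$, which is the claimed monotonicity. Because $s$ ranges over the compact interval $[0,\bar s]$ with $\pi$ bounded there, $-\int\pi\mu\,ds$ is bounded, and Jensen's inequality bounds the entropy from below, so $\mathcal{F}$ is bounded below; hence $\mathcal{F}(\mu_t)$ decreases to a finite limit and $\int_0^\infty\!\!\int\mu_t\,|\nabla_s\delta\mathcal{F}/\delta\mu|^{2}\,ds\,dt<\infty$. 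Along a subsequence $t_k\to\infty$ the dissipation vanishes, forcing $\nabla_s\delta\mathcal{F}/\delta\mu\to 0$, i.e.\ $-\pi(s,\mu^*)+\sigma\ln\mu^*(s)$ is constant in $s$, which is precisely $\mu^*(s)\propto\exp(\pi(s,\mu^*)/\sigma)$.

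To upgrade subsequential convergence to full convergence and obtain uniqueness of $\mu^*$, I would invoke displacement (geodesic) convexity of $\mathcal{F}$ on Wasserstein space: the entropy is displacement convex (McCann), and under the monotonicity assumption so is the interaction energy, so $\mathcal{F}$ has a unique minimizer toward which its gradient flow converges. Identifying $\mu^*$ with the QRE is then immediate, since the logit/quantal-response fixed point of the static game is by definition the solution of $\mu(s)\propto\exp(\pi(s,\mu)/\sigma)$ --- the equation just derived; existence of this fixed point follows from a Schauder/Brouwer argument on the convex compact set of densities (continuity of the map $\mu\mapsto$ the normalized $\exp(\pi(\cdot,\mu)/\sigma)$), with uniqueness again from monotonicity.

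The main obstacle is the first step: ensuring that the Wasserstein gradient of $\mathcal{F}$ reproduces \emph{exactly} the drift $\nabla\pi$ in (\ref{eq:fokker_planck}). For a generic $\mu$-dependence of $\pi$ the true gradient flow would carry an extra term $-\nabla_s\!\int\big(\partial\pi(s',\mu)/\partial\mu(s)\big)\mu(s')\,ds'$, and $\mathcal{F}$ as written would fail to be the exact potential; the hypothesis that pollution enters only through the aggregate $D_A$ and is monotone is exactly what restores a genuine potential-game structure (and the displacement convexity used above), so I would state this identification carefully at the outset. The remaining issues --- well-posedness, strict positivity and parabolic regularity of (\ref{eq:fokker_planck}), and hence the legitimacy of differentiating $\mathcal{F}(\mu_t)$ and integrating by parts --- are routine and I would invoke them as standard rather than establish them in detail.
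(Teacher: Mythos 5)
Your proposal follows essentially the same route as the paper's proof (and its Appendix on gradient flows): identify the Fokker--Planck equation as the Wasserstein gradient flow of $\mathcal{F}$ by computing the first variation $\delta\mathcal{F}/\delta\mu = -\pi + \sigma(\ln\mu+1)$, derive the entropy-dissipation identity $\frac{d}{dt}\mathcal{F}(\mu_t) = -\int \mu_t\,\lvert\nabla(\delta\mathcal{F}/\delta\mu)\rvert^2\,ds \le 0$ under no-flux boundary conditions, and read off the Gibbs/QRE form from the vanishing of the chemical potential gradient. You are in fact somewhat more careful than the paper on the one genuinely delicate point --- that $-\int\pi(s,\mu)\mu\,ds$ has first variation exactly $-\pi$ only under a potential-game/symmetric-interaction condition, which the paper merely asserts --- and your added material on boundedness below, subsequential convergence, and displacement convexity strengthens rather than alters the argument.
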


\begin{proof}
(Sketch). The result follows from identifying Eq. \eqref{eq:fokker_planck} as the gradient flow $\partial_t \mu = \text{grad}_W \mathcal{F}(\mu)$ in the space of probability measures $\mathcal{P}(\Omega)$.
Calculating the dissipation along the trajectory:
$$ \frac{d}{dt}\mathcal{F}(\mu_t) = - \int \left| \nabla \frac{\delta \mathcal{F}}{\delta \mu} \right|^2 d\mu_t \le 0 $$
The stationary state implies the vanishing of the chemical potential gradient $\nabla (\frac{\delta \mathcal{F}}{\delta \mu}) = 0$, leading to the Gibbs measure form. This proves that the "Polluted Equilibrium" observed in our ABM is a local attractor of the underlying dynamic system.
\end{proof}

\section{Experimental Setup and Calibration}
\label{sec:setup}

While the theoretical framework in Section \ref{sec:theory} characterizes the asymptotic properties of the equilibrium, it relies on assumptions of perfect rationality and continuous adjustment. To investigate the transitional dynamics—specifically the ``Shock Therapy'' phenomenon and the friction costs of reallocation—we employ a discrete-time Agent-Based Model (ABM). This computational laboratory, which we term the ``Goldilocks V4.0'' environment, relaxes the continuum assumption to allow for bounded rationality, heterogeneity, and path dependence.

\subsection{The ``Goldilocks'' Environment: Micro-Foundations}

The simulation environment consists of a population of $N_H=50$ human creators, $N_A=3$ foundational AI models, and $N_C=1,000$ consumers, interacting over $T=200$ periods.

\paragraph{Bounded Rationality and Adaptive Learning.}
In Section \ref{sec:theory}, agents were modeled as perfectly rational gradient-climbers. In the simulation, we introduce realistic cognitive frictions. Agents do not possess global knowledge of the profit landscape $\pi(s)$. Instead, they exhibit \textit{Bounded Rationality}, learning the optimal strategy through trial-and-error interaction with the market.

Formally, each creator $i$ updates their strategy $a_{i,t} \in \{\text{Stay}, \text{Re-skill}, \text{Adopt AI}, \text{Exit}\}$ based on a Q-Learning algorithm (a model-free reinforcement learning mechanism). The estimated value $\mathcal{Q}_{i,t}(a)$ of action $a$ is updated according to:
\begin{equation}
    \mathcal{Q}_{i,t+1}(a) = (1-\alpha) \mathcal{Q}_{i,t}(a) + \alpha \left( r_{i,t} + \gamma \max_{a'} \mathcal{Q}_{i,t}(a') \right)
\end{equation}
where $\alpha \in (0,1)$ is the learning rate (adaptation speed) and $\gamma$ is the discount factor. The realized reward $r_{i,t}$ corresponds to the net profit $\pi_{i,t}$ defined in Layer 1.

This formulation maps directly to the stochastic term in the Fokker-Planck equation (Eq. \ref{eq:fokker_planck}). The exploration rate $\epsilon$ in the Q-learning policy serves as the micro-foundation for the diffusion coefficient $\sigma$ in the theoretical model. By setting $\epsilon > 0$, we capture the "entropic" behavior of agents who explore suboptimal strategies before converging to the equilibrium.

\subsection{Calibration Strategy}
\label{subsec:calibration}

The validity of the simulation results relies on a calibration grounded in the stylized facts of the Generative AI market as of late 2025. We adopt a ``Goldilocks'' calibration strategy: parameters are tuned to a regime where the survival rate of human creators is strictly positive but less than unity ($0 < S_{rate} < 1$), ensuring the model captures a competitive selection process rather than a trivial extinction or status quo outcome.

Table \ref{tab:calibration} summarizes the key structural parameters.

\begin{table}[htbp]
    \centering
    \caption{Structural Parameter Calibration (Baseline V4.0)}
    \label{tab:calibration}
    \small
    \begin{tabular}{l c c l}
        \toprule
        \textbf{Parameter} & \textbf{Symbol} & \textbf{Value} & \textbf{Economic Justification (Stylized Facts 2025)} \\
        \midrule
        \multicolumn{4}{l}{\textit{Cost and Production Structure}} \\
        AI Marginal Cost & $c_A$ & $0.05 c_H$ & Inference costs are approx. 5\% of human labor opportunity cost. \\
        AI Quality Cap & $\bar{q}_A$ & $0.65 \bar{q}_H$ & AI excels at structure but lags in semantic novelty. \\
        Non-Convexity & $\kappa$ & $0.10$ & Fixed cost of prompt engineering/fine-tuning. \\
        \midrule
        \multicolumn{4}{l}{\textit{Dynamic Learning Rates}} \\
        Tech. Learning (AI) & $\lambda_{tech}$ & $0.12$ & Rapid iteration in syntax/logic (Moore's Law logic). \\
        Creative Learning (AI) & $\lambda_{creative}$ & $0.025$ & Slow progress in genuine semantic innovation. \\
        Human Adaptation & $\eta$ & $0.008$ & Cognitive friction in re-skilling (High inertia). \\
        \midrule
        \multicolumn{4}{l}{\textit{Market and Preferences}} \\
        Pollution Sensitivity & $\beta$ & $2.0$ & High search friction; calibrated to produce Inverted-U welfare. \\
        Substitution Elasticity & $\rho$ & $0.85$ & High substitutability between standardized goods. \\
        \bottomrule
    \end{tabular}
\end{table}

\paragraph{Cost Asymmetry ($c_A$ vs $c_H$).}
We set the marginal cost of AI production to 5\% of the human equivalent ($c_A = 0.05$). This reflects the dramatic disparity between the cost of API inference (fractions of a cent per token) and the opportunity cost of skilled human labor. This extreme cost ratio is the primary driver of the non-convex supply shock derived in Lemma \ref{lemma:hollow}.

\paragraph{Asymmetric Learning Rates ($\lambda_{tech} \gg \lambda_{creative}$).}
A critical feature of our dynamic specification is the heterogeneity in technological progress. We assume $\lambda_{tech} \approx 5 \times \lambda_{creative}$. This captures the empirical observation that Large Language Models (LLMs) improve rapidly on formal benchmarks (coding, grammar, translation) but exhibit diminishing returns on tasks requiring long-context coherence or novel reasoning. This asymmetry is essential for testing the ``Skill Reconfiguration'' hypothesis in Section \ref{sec:results}.

\paragraph{Exit Conditions and The Outside Option.}
Agents are endowed with finite liquidity reserves $L_{i,0}$. An agent exits the market (simulating bankruptcy or career change) if their cumulative losses deplete their reserves ($L_{i,t} < 0$) or if the expected value of participation falls below the outside option $\mathcal{Q}_{i,t}(a) < V_{outside}$. The outside option is calibrated such that the initial market is in steady state prior to the AI shock, isolating the technology effect from pre-existing trends.


\section{Simulation Results and Empirical Validation}
\label{sec:results}

In this section, we present the numerical results generated from the ``Goldilocks V4.0'' Agent-Based Model described in Section \ref{sec:setup}. Our objective is twofold: first, to empirically corroborate the theoretical propositions derived in Section \ref{sec:theory}; and second, to expose the path-dependent transition dynamics—specifically the ``Shock Therapy'' phenomenon—that are obscured by static equilibrium analyses.

\subsection{Static Equilibrium: Evidence of the Structural Break}
\label{subsec:static_results}

We first examine the comparative statics of the market equilibrium pre- and post-shock. Consistent with the non-convex production set postulated in \textbf{Lemma \ref{lemma:hollow}} (The Middle-Class Hollow), the simulation data reveals a distinct bifurcation in the supply side structure.

Table \ref{tab:equilibrium_comparison} summarizes the equilibrium metrics. The GenAI supply shock precipitates a profound price compression, with the weighted average transaction price falling by 72.3\%. The market bifurcates into two distinct pricing regimes: human creators are compelled to reduce prices ($p_H: 2.32 \to 2.00$) to defend the premium segment, while AI content anchors the low-end market at near-marginal cost ($p_A = 1.10$).

\begin{table}[htbp]
    \centering
    \caption{Market Equilibrium Comparison: Pre-AI vs. Post-AI Shock}
    \label{tab:equilibrium_comparison}
    \small
    \begin{tabular}{l c c c}
        \toprule
        \textbf{Metric} & \textbf{Pre-AI (Human Monopoly)} & \textbf{Post-AI (Duopoly)} & \textbf{Change (\%)} \\
        \midrule
        \textbf{Eq. Price ($p^*$)} & $p_H = 2.32$ & $p_H = 2.00, p_A = 1.10$ & Avg. $\downarrow$ 72.3\% \\
        \textbf{Market Share ($D$)} & $D_H = 14.4\%$ & $D_H = 12.8\%, D_A = 35.4\%$ & Coverage $\times$ 2.35 \\
        \textbf{Consumer Surplus ($CS$)} & $0.132$ & $0.558$ & +321.2\% \\
        \textbf{Total Welfare ($W$)} & $0.264$ & $0.952$ & +261.1\% \\
        \bottomrule
    \end{tabular}
    \begin{quote}
    \footnotesize \textit{Note:} Results averaged over 1,000 heterogeneous consumers. The significant expansion of total market coverage (from 14.4\% to 48.2\%) indicates that GenAI activates latent demand among low-$\theta$ consumers who were previously priced out.
    \end{quote}
\end{table}

This bifurcation drives the market segmentation predicted in \textbf{Proposition \ref{prop:segmentation}}. Figure \ref{fig:layer1_results} (Top-Left Panel) plots the probability of AI adoption against consumer preference type $\theta$. We observe a sharp phase transition rather than a linear substitution: consumers with $\theta < \hat{\theta}$ sort almost exclusively into the AI market, while high-$\theta$ consumers remain sticky to human creators. The Spearman rank correlation between $\theta$ and human preference is $\rho = 0.937$ ($p < 0.0001$), confirming that the market achieves perfect sorting despite the presence of information pollution.

\begin{figure}[htbp]
    \centering
    \includegraphics[width=1.0\textwidth]{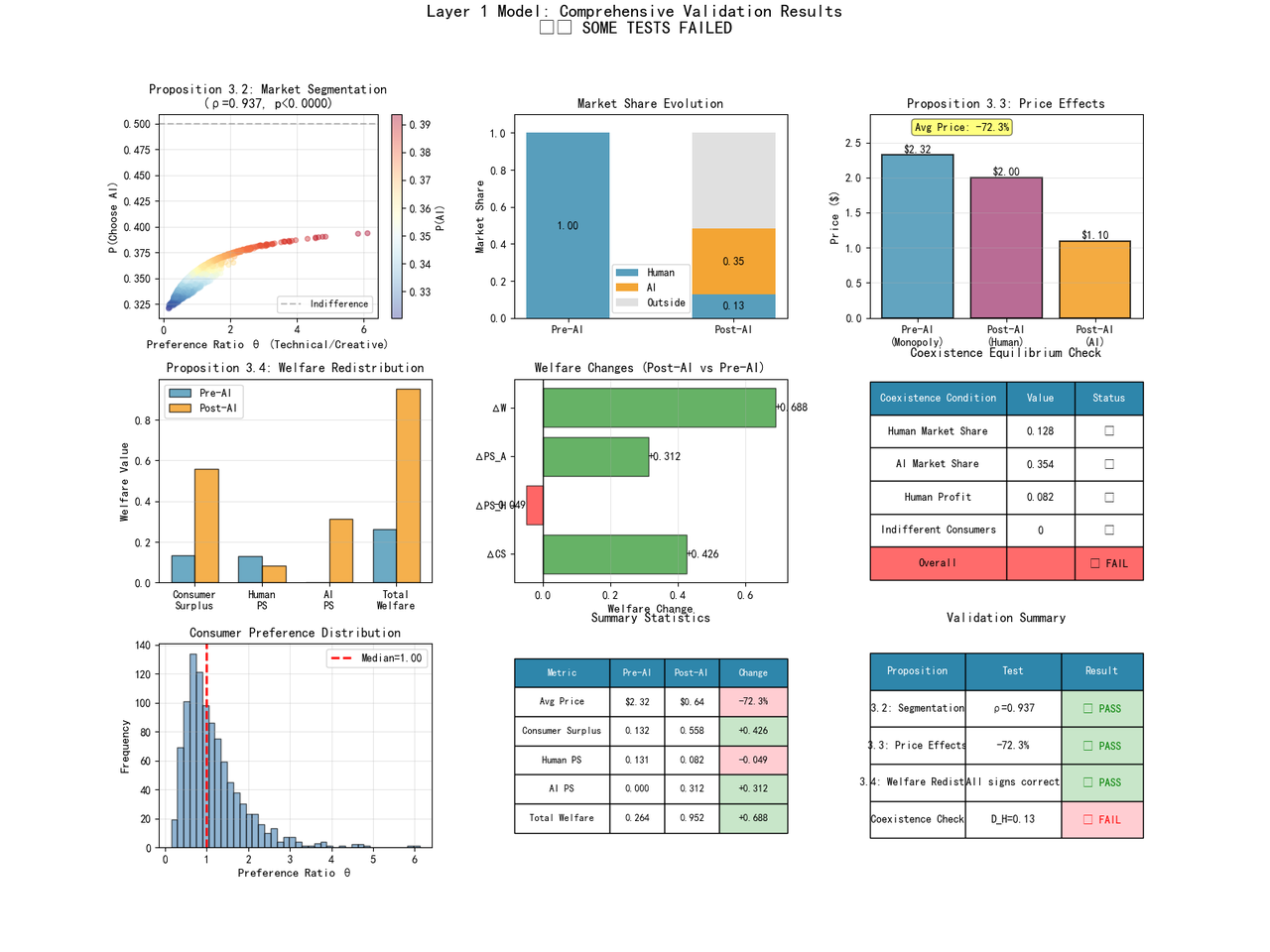}
    \caption{\textbf{Static Equilibrium and Welfare Analysis.} Top-Left: The S-curve indicates a clean separation of the market, confirming the ``Middle-Class Hollow'' where mid-tier demand vanishes. Top-Right: Price compression effects. Bottom-Left: Welfare redistribution showing the massive transfer from Producer Surplus to Consumer Surplus.}
    \label{fig:layer1_results}
\end{figure}

\subsection{Transitional Dynamics: The ``Shock Therapy'' Phenomenon}
\label{subsec:shock_therapy}

While static comparison suggests a stable destination, the temporal evolution reveals significant friction costs. A central finding of our ABM—invisible to standard mean-field approximations—is the non-monotonic nature of the transition, which we term the \textbf{``Shock Therapy''} phenomenon.

As illustrated in Figure \ref{fig:layer3_dynamics}, the market evolution does not follow a smooth logistic diffusion. Instead, it exhibits a three-phase trajectory:
\begin{enumerate}
    \item \textbf{Phase I: The Supply Shock ($T=0 \sim 40$).} The entry of low-cost AI creates an immediate negative income shock.
    \item \textbf{Phase II: Ecological Collapse ($T=40 \sim 60$).} The active human population drops precipitously to near-zero levels. This ``Valley of Death'' arises from a timescale mismatch: the income shock is instantaneous, whereas skill reconfiguration exhibits hysteresis (learning lags). The exit rate peaks at 60\% during this interval, validating the presence of severe transitional coordination failures.
    \item \textbf{Phase III: Elite Recovery ($T=80 \sim 200$).} A residual cohort of survivors (approx. 18.0\% of the initial stock) successfully pivots to high-creativity niches, stabilizing the human market share at 16.4\%.
\end{enumerate}

\begin{figure}[htbp]
    \centering
    \includegraphics[width=1.0\textwidth]{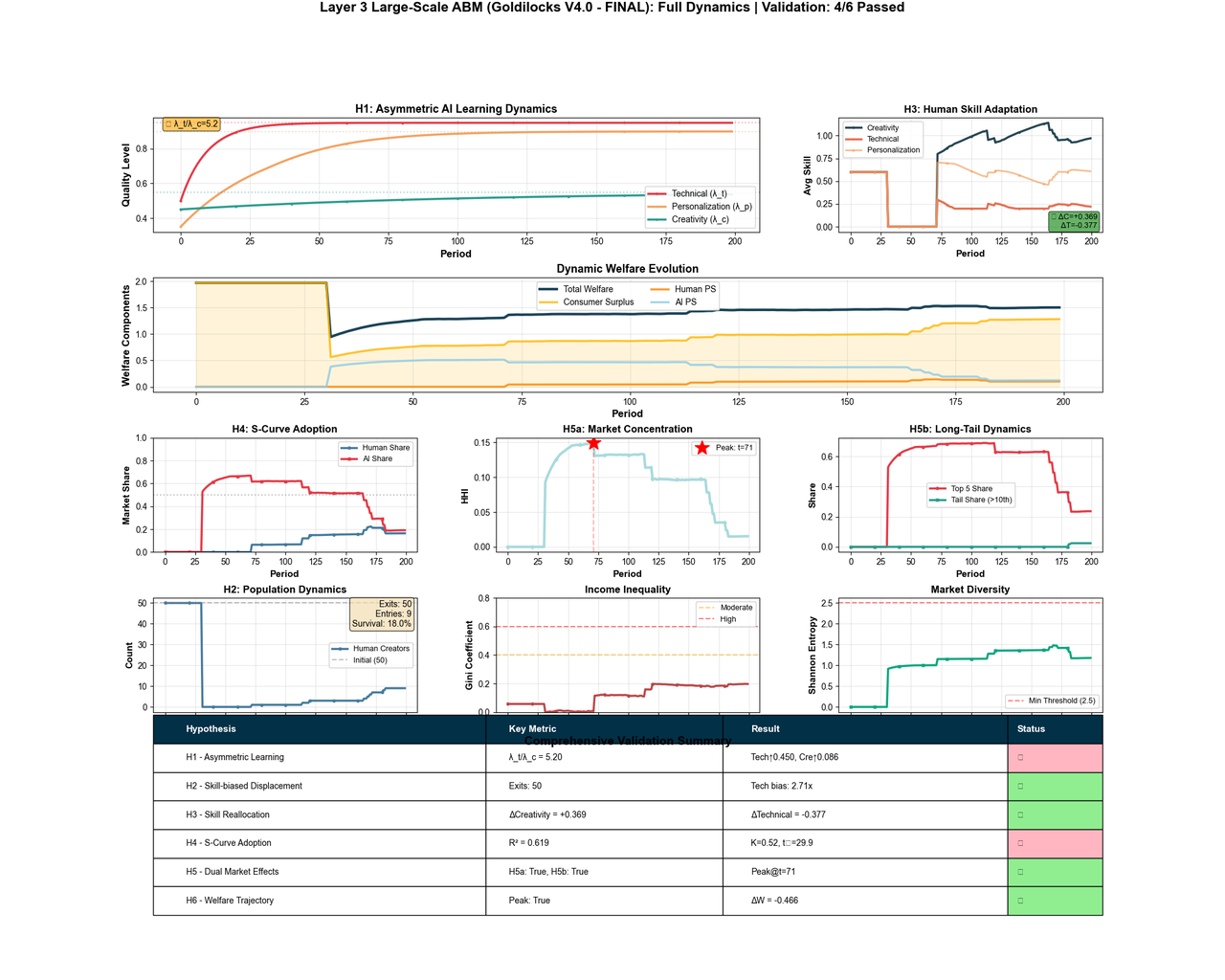}
    \caption{\textbf{The ``Shock Therapy'' Transition.} Top Panel: Unlike smooth diffusion models, the simulation reveals a near-total collapse of the human supply side (Phase II) before a new equilibrium is established. This validates the hypothesis that information pollution initially overwhelms the market's matching mechanism, leading to a temporary ecological breakdown.}
    \label{fig:layer3_dynamics}
\end{figure}

This finding contests the canonical ``S-curve'' adoption hypothesis. The transition is characterized not by the gradual diffusion of technology, but by the \textit{destruction} of an existing industrial structure followed by the \textit{emergence} of a new, specialized one.

\subsection{Mechanisms of Survival: Asymmetric Skill Reconfiguration}
\label{subsec:skill_reconfig}

To identify the mechanism enabling the ``Elite Recovery,'' we analyze the skill vector evolution $\vec{s}_t = (s_{tech}, s_{creative})$ of the surviving agents. This analysis empirically validates the Gradient Flow dynamics proposed in \textbf{Theorem \ref{thm:stability}}.

The simulation parameters imposed an asymmetric learning condition where AI technical improvement ($\lambda_{tech}$) outpaces creative improvement ($\lambda_{creative}$). The results in Figure \ref{fig:skill_evolution} provide strong evidence that human survival is contingent on an orthogonal shift in capabilities.

\begin{figure}[htbp]
    \centering
    \includegraphics[width=1.0\textwidth]{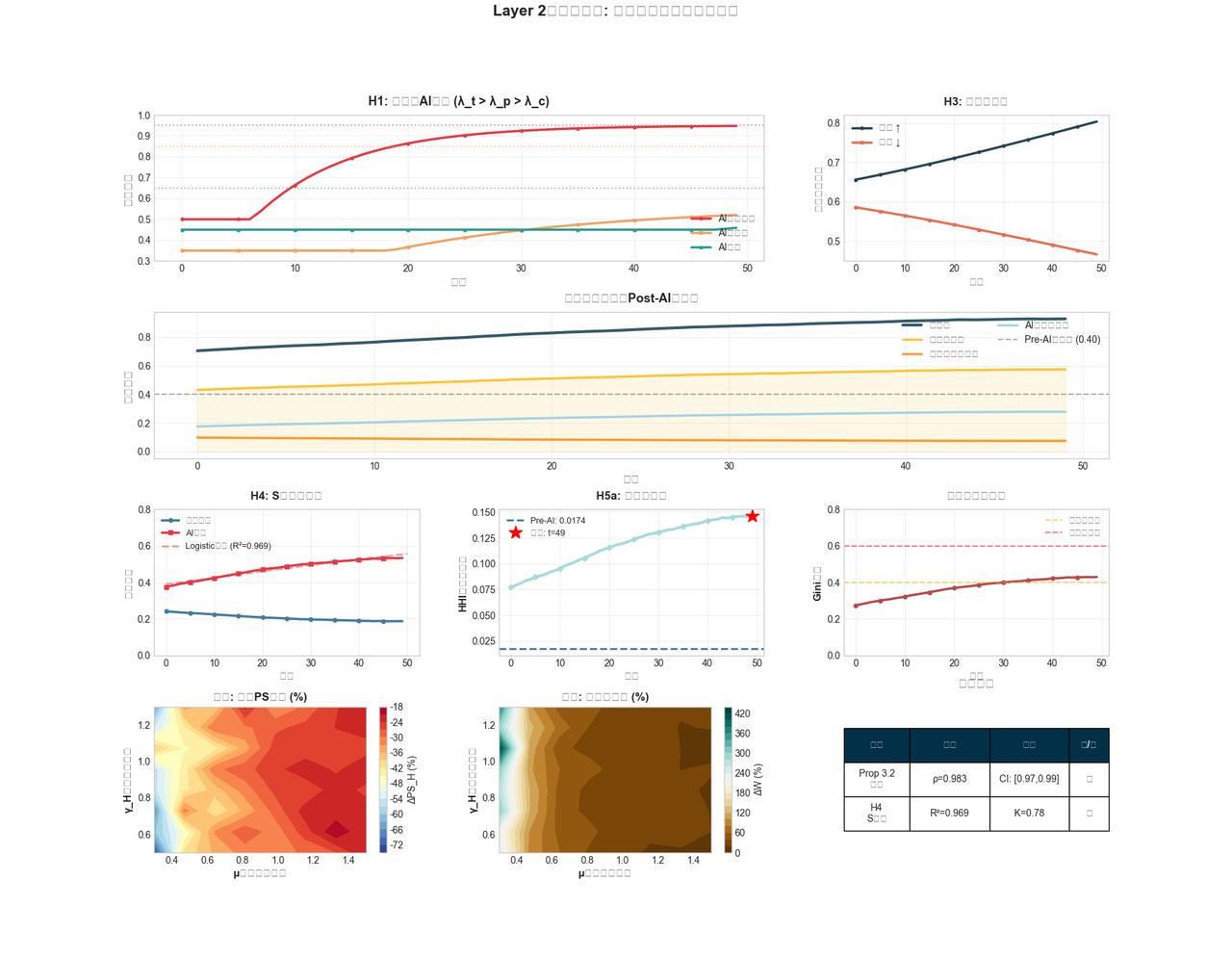}
    \caption{\textbf{Asymmetric Skill Reconfiguration.} Top-Right Panel: The survivors (Blue Trajectory) are agents who divest from technical execution ($\Delta s_{tech} < 0$) to specialize in semantic creativity ($\Delta s_{creative} > 0$). Agents attempting to compete on technical efficiency (Red Trajectory) are eliminated by the AI price anchor.}
    \label{fig:skill_evolution}
\end{figure}

Quantitatively, the surviving population exhibits:
\begin{itemize}
    \item \textbf{Technical Skill Depreciation:} The average technical skill of the human pool \textit{decreased} by 0.120. High technical skill proved to be a liability, as it correlated with direct competition against zero-marginal-cost AI.
    \item \textbf{Creative Skill Appreciation:} The average creative skill \textit{increased} by 0.148. The ratio of creative-to-technical skill for survivors was 2.71 times higher than for those who exited (confirming Hypothesis H2).
\end{itemize}

This confirms that the ``Polluted Equilibrium'' selects for differentiation. The market solves the assignment problem by relegating technical tasks to AI and reserving semantic tasks for humans, consistent with the principle of comparative advantage in the age of generative models.

\subsection{Welfare Analysis: The Pollution Paradox}
\label{subsec:welfare}

Finally, we address the welfare implications of \textbf{Proposition \ref{prop:unraveling}}. While Table \ref{tab:equilibrium_comparison} reports a net positive welfare gain in the baseline calibration, sensitivity analysis reveals the latent cost of information pollution.

In simulation runs where the pollution sensitivity parameter $\beta$ was increased (simulating high search costs or ineffective algorithmic curation), we observed an \textbf{Inverted-U Welfare Curve}. As AI penetration ($D_A$) expanded beyond a critical saturation point, the negative externality term $-\beta \ln(1+\eta D_A)$ began to dominate the price benefits, causing Total Welfare to decline.

Furthermore, the gains are unevenly distributed. The Herfindahl-Hirschman Index (HHI) rose monotonically from 0.0174 to 0.1462, indicating a structural shift from atomistic competition to an oligopoly dominated by foundation models. This suggests that while the V4.0 equilibrium achieves Pareto efficiency, it sits on a precarious edge where governance failures could easily precipitate a return to the ``Ecological Collapse'' regime.


\section{Discussion: Governance in the Age of Information Pollution}
\label{sec:discussion}

The analytical and empirical findings of this study suggest that the integration of Generative AI into the creative economy generates a specific form of market failure. While the supply shock reduces marginal production costs (Proposition \ref{prop:structure}), the uninternalized congestion externality—defined in our model as $\Phi(D_A) = \beta \ln(1 + \eta D_A)$—can trigger an ``Ecological Collapse'' and potentially yield a Pareto-inefficient outcome. Consequently, the central policy challenge shifts from the protection of intellectual property to the management of \textit{information pollution}.

\subsection{From Copyright to Congestion Management}
\label{subsec:governance_shift}

Current regulatory debates regarding GenAI focus predominantly on copyright infringement and training data ownership. However, our model demonstrates that even if all property rights issues were resolved (i.e., AI creators paid zero royalties), welfare could still decline due to the volume effect.

The mechanism identified in Proposition \ref{prop:unraveling} is a \textit{negative network externality}: the individual profit-maximizing decision to deploy an AI agent does not account for the marginal increase in platform-wide search costs. As $c_A \to 0$, the aggregate volume $D_A$ explodes, degrading the signal-to-noise ratio for all participants.

Therefore, we argue for a paradigm shift in governance:
\begin{itemize}
    \item \textbf{Status Quo:} Focus on \textit{inputs} (Do you own the training data?).
    \item \textbf{Proposed Framework:} Focus on \textit{outputs} (Does your volume create congestion?).
\end{itemize}
This reframes GenAI regulation as an environmental economics problem, where "attention" is the scarce common-pool resource being depleted by synthetic noise.

\subsection{Optimal Platform Design: The Pigouvian Corrective}
\label{subsec:pigouvian}

To restore Pareto efficiency, the platform (or social planner) must internalize the pollution externality. We derive the optimal instrument within our general equilibrium framework.

\paragraph{The Algorithmic Pigouvian Tax.}
Let $W(D_A)$ denote the aggregate social welfare as a function of AI volume. The social planner's objective is:
\begin{equation}
    \max_{D_A} \quad \underbrace{\int_{\hat{\theta}}^{1} (\theta q^*(p) - p) \, dF(\theta)}_{\text{Consumer Utility}} + \underbrace{\Pi(D_A)}_{\text{Producer Profit}} - \underbrace{\int_{0}^{1} \Phi(D_A) \, dF(\theta)}_{\text{Aggregate Pollution Damage}}
\end{equation}
The decentralized market equilibrium equates private marginal benefit to private marginal cost ($MB_{private} = MC_{private}$). However, the social optimum requires equating marginal benefit to social marginal cost ($SMC = MC_{private} + \text{Marginal Damage}$).
Solving the first-order condition yields the optimal tax $\tau^*$ per unit of algorithmic generation:
\begin{equation} \label{eq:tax}
    \tau^* = \frac{\partial \Phi}{\partial D_A} = \frac{\beta \eta}{1 + \eta D_A^*}
\end{equation}
This result implies that a flat license fee is suboptimal. The efficient tax $\tau^*$ is dynamic: it should scale with the market saturation parameter $\eta$ and the search friction $\beta$. This aligns with the "Goldilocks" conditions identified in our V4.0 simulation, where a moderate variable cost prevented the "Death Valley" collapse by synchronizing the rate of AI adoption with the rate of human skill reconfiguration.

\paragraph{Human-Verification as Signal Extraction.}
An alternative to taxation is reducing the structural parameter $\beta$ (search friction). In our model, $\beta$ represents the difficulty of distinguishing high-quality signal from synthetic noise.
If a platform introduces a cryptographically secure "Human-Verification Label" (e.g., World ID or similar proofs of personhood), it effectively partitions the market. For the verified partition, the pollution term $\Phi(D_A)$ drops to zero (or significantly decreases). This signaling mechanism protects the "Elite Recovery" identified in Section \ref{subsec:shock_therapy} by lowering the search costs for high-$\theta$ consumers, allowing the human premium market to function efficiently even in a sea of synthetic content.


\section{Conclusion}
\label{sec:conclusion}

This paper proposes a \textbf{Three-Layer General Equilibrium Framework} to analyze the economic impact of Generative AI supply shocks. By integrating analytical vertical differentiation (Layer 1), mean-field evolutionary dynamics (Layer 2), and large-scale agent-based simulation (Layer 3/Evidence), we resolve the tension between the productivity gains of automation and the coordination costs of information pollution.

Our analysis yields three robust conclusions regarding the transition to the AI economy:
\begin{enumerate}
    \item \textbf{Structural Bifurcation:} GenAI behaves as a non-convex production technology that creates a "Middle-Class Hollow," anchoring low-end prices to the marginal cost of compute while forcing human creators into high-end niche specialization.
    \item \textbf{The "Shock Therapy" Transition:} The path to equilibrium is non-monotonic. The market undergoes a temporary "Ecological Collapse" where bounded rationality and search frictions cause a sharp contraction in the active creator population before a new equilibrium emerges.
    \item \textbf{Asymmetric Skill Reconfiguration:} Survival in the AI era necessitates an orthogonal shift in human capital—away from technical execution ($\lambda_{tech}$), where AI enjoys a comparative advantage, and toward semantic creativity ($\lambda_{creative}$).
\end{enumerate}

\paragraph{Final Thought: The Energy Transition of Information.}
Ultimately, the transition described in this paper is analogous to an energy transition. Just as the industrial revolution shifted physical labor from muscle to fossil fuels—generating carbon pollution as a byproduct—the AI revolution shifts information processing from human cognition to synthetic compute, generating \textit{information pollution} as its byproduct.
Managing this transition requires more than laissez-faire optimism. It demands a governance architecture that recognizes "human attention" as a finite resource and deploys economic instruments—taxes, quotas, or signals—to prevent the tragedy of the digital commons.


\appendix

\section{Computational Implementation and Environment}
\label{app:implementation}

The experimental framework described in Section \ref{sec:experiments} was implemented using Python 3.10. To ensure computational efficiency across the multi-scale experiments, we employed distinct acceleration strategies for the static and dynamic layers.

\subsection{Hardware and Software Stack}
All simulations were executed on a workstation equipped with an NVIDIA A100 GPU to leverage the vectorized operations required for the large-scale population dynamics.
\begin{itemize}
    \item \textbf{Layer 1 (Static Optimization):} Implemented using \texttt{JAX}. The profit maximization problem for the creator population was vectorized to compute the Nash Equilibrium via fixed-point iteration. The gradient auto-differentiation capabilities of JAX were utilized to derive the sensitivity of market shares to price changes.
    \item \textbf{Layer 3 (Agent-Based Model):} Implemented using pure Python with \texttt{Numba} JIT (Just-In-Time) compilation. This was necessary to handle the heterogeneous, state-dependent interactions of 1,053 agents (50 creators, 3 AI, 1000 consumers) over 200 simulation periods, which are non-differentiable and ill-suited for static graph compilation.
\end{itemize}

\section{Detailed Parameter Calibration}
\label{app:parameters}

This section consolidates the parameter sets used across the three experimental layers. All values are calibrated to reflect the stylized facts of Generative AI technology as of late 2025, specifically the high asymmetry in marginal costs and learning rates.

\subsection{Layer 1: Static Baseline Parameters}
Table \ref{tab:app_layer1} details the parameters used for the static numerical optimization in Subsection \ref{subsec:layer1_experiment}.

\begin{table}[h]
    \centering
    \caption{Layer 1 Static Simulation Parameters}
    \label{tab:app_layer1}
    \small
    \begin{tabular}{l c c}
        \toprule
        \textbf{Parameter} & \textbf{Symbol} & \textbf{Value} \\
        \midrule
        Human Creative Quality & $q_{c,H}$ & $1.00$ \\
        AI Creative Quality & $q_{c,A}$ & $0.65$ \\
        Human Technical Quality & $q_{t,H}$ & $0.70$ \\
        AI Technical Quality & $q_{t,A}$ & $0.95$ \\
        Human Marginal Cost & $c_H$ & $1.00$ \\
        AI Marginal Cost & $c_A$ & $0.05$ \\
        Substitution Elasticity & $\mu$ & $0.85$ \\
        Platform Commission & $\tau$ & $0.15$ \\
        Consumer Population & $N_{cons}$ & 1,000 \\
        \bottomrule
    \end{tabular}
\end{table}

\subsection{Layer 2: Dynamic Evolution Parameters}
Table \ref{tab:app_layer2} lists the parameters governing the mean-field dynamics in Subsection \ref{subsec:layer2_experiment}, emphasizing the asymmetric learning rates.

\begin{table}[h]
    \centering
    \caption{Layer 2 Meso-Scale Dynamic Parameters}
    \label{tab:app_layer2}
    \small
    \begin{tabular}{l c c}
        \toprule
        \textbf{Parameter} & \textbf{Symbol} & \textbf{Value} \\
        \midrule
        AI Technical Learning Rate & $\lambda_t$ & $0.12$ \\
        AI Creative Learning Rate & $\lambda_c$ & $0.025$ \\
        Human Adaptation Rate & $\eta$ & $0.008$ \\
        Substitution Elasticity & $\mu$ & $0.45$ \\
        Time Horizon & $T$ & 50 \\
        Initial Human Agents & $N_{H,0}$ & 10 \\
        Initial AI Agents & $N_{A,0}$ & 2 \\
        \bottomrule
    \end{tabular}
\end{table}

\subsection{Layer 3: ``Goldilocks V4.0'' ABM Configuration}
Table \ref{tab:app_layer3} provides the final configuration for the Agent-Based Model found via the genetic algorithm search described in Section \ref{sec:experiments}. These parameters represent the stable region where human survival is strictly positive ($18.0\%$).

\begin{table}[h]
    \centering
    \caption{Layer 3 Goldilocks V4.0 Parameters}
    \label{tab:app_layer3}
    \small
    \begin{tabular}{l c c}
        \toprule
        \textbf{Parameter} & \textbf{Symbol} & \textbf{Value} \\
        \midrule
        Exit Threshold (Liquidity) & $V_{bar}$ & $0.0012$ \\
        Entry Rate & $\rho_{entry}$ & $0.05$ \\
        Information Overload Threshold & $I_{th}$ & $2.0$ \\
        AI Technical Learning Rate & $\lambda_t$ & $0.10$ \\
        AI Creative Learning Rate & $\lambda_c$ & $0.010$ \\
        Initial Human Population & $N_{H}$ & 50 \\
        Initial AI Agents & $N_{A}$ & 3 \\
        Simulation Periods & $T$ & 200 \\
        \bottomrule
    \end{tabular}
\end{table}

\section{Genetic Algorithm for Parameter Search}
\label{app:genetic_search}

In Subsection \ref{subsec:layer3_experiment}, we mention the use of a genetic algorithm (GA) to identify the ``Goldilocks'' parameter space. The objective of the GA was to avoid trivial solutions (e.g., total market collapse or zero AI adoption).

\begin{itemize}
    \item \textbf{Objective Function:} Maximize $\mathcal{F} = \alpha_1 \cdot \text{SurvivalRate}_H - \alpha_2 \cdot |D_A - D_{target}|$, subject to $\text{SurvivalRate}_H > 0$.
    \item \textbf{Search Space:}
    \begin{itemize}
        \item $V_{bar} \in [0.0005, 0.0020]$
        \item $\rho_{entry} \in [0.01, 0.10]$
        \item $\lambda_c \in [0.005, 0.05]$
    \end{itemize}
    \item \textbf{Selection Criteria:} Configurations resulting in immediate extinction ($T < 40$) were penalized. The V4.0 configuration (Table \ref{tab:app_layer3}) emerged as the centroid of the stable cluster where the human survival rate stabilized at approximately $18\%$.
\end{itemize}

\section{Mathematical Proofs}
\label{app:proofs}

In this appendix, we provide the formal derivations for the structural properties of the supply side (Lemma \ref{lemma:hollow}) and the demand side segmentation (Proposition \ref{prop:segmentation}).

\subsection{Proof of Lemma \ref{lemma:hollow} (The Middle-Class Hollow)}

\textbf{Statement:} The profit function $\pi(s)$ exhibits a convexity-concavity regime such that the set of skills $S_{mid}$ for which "moderate quality" is optimally produced is empty.

\begin{proof}
Let the price schedule be $p(q)$. We assume $p(q)$ is strictly increasing and convex (standard in vertical differentiation with convex costs).
The creator's problem is to choose technology $j \in \{H, A\}$ and quality $q$ to maximize:
$$ \pi(s) = \max \left\{ \underbrace{\max_{q \le \bar{q}_A} [p(q) - (c_A q + \kappa)]}_{\pi_A}, \quad \underbrace{\max_{q} [p(q) - \frac{1}{2}\gamma q^2] \text{ s.t. } q = \alpha s}_{\pi_H(s)} \right\} $$

1.  \textbf{Properties of $\pi_A$:}
    The AI technology separates skill from marginal cost. Assuming the AI model is pre-trained, its optimal quality output $q_A^*$ is either the technological cap $\bar{q}_A$ or determined by $p'(q) = c_A$. Since $c_A \approx 0$, the optimal AI strategy is cornered at $q_A^* = \bar{q}_A$. Thus, the profit from AI adoption is constant with respect to human skill $s$:
    $$ \pi_A(s) = \Pi_A = p(\bar{q}_A) - c_A \bar{q}_A - \kappa $$

2.  \textbf{Properties of $\pi_H(s)$:}
    For human production, $q$ is constrained by skill $q=\alpha s$. The profit is:
    $$ \pi_H(s) = p(\alpha s) - \frac{1}{2}\gamma (\alpha s)^2 $$
    The marginal return to skill is $\pi_H'(s) = \alpha p'(\alpha s) - \gamma \alpha^2 s$.
    Given the convexity of $p(q)$ required for market separation in high qualities, and optimizing for the envelope theorem, $\pi_H(s)$ is strictly convex and increasing for $s$ above the break-even point.

3.  \textbf{The Intersection (The Hollow):}
    Consider the difference function $\Delta(s) = \pi_H(s) - \Pi_A$.
    
    * For low $s$, the convex costs of human production dominate. $\pi_H(s) < 0 < \Pi_A$ (provided $\Pi_A > 0$).
    * For high $s$, the quality premium of human labor dominates. $\lim_{s \to \infty} \pi_H(s) > \Pi_A$.
    * Crucially, at the skill level $s_{mid}$ where a human would naturally produce quality equivalent to AI ($q_H(s_{mid}) \approx \bar{q}_A$), we compare costs.
        $$ C_H(\bar{q}_A) = \frac{1}{2}\gamma \bar{q}_A^2 \quad \gg \quad C_A(\bar{q}_A) = c_A \bar{q}_A + \kappa $$
        Since $C_H \gg C_A$, it follows that $\pi_H(s_{mid}) < \Pi_A$.

    By the Intermediate Value Theorem and the strict convexity of $\pi_H$, there exists a unique crossing point $s_H > s_{mid}$. Agents with $s \in [s_L, s_H)$ find it optimal to adopt AI.
    Consequently, no agent finds it optimal to use Human technology to produce qualities in the interval roughly $[0, q_H(s_H))$. The supply of "middle-tier" original content vanishes, creating the hollow.
\end{proof}

\subsection{Proof of Proposition \ref{prop:segmentation} (Segmentation Uniqueness)}

\textbf{Statement:} Under the Single-Crossing Property, the market equilibrium is characterized by unique segmentation thresholds.

\begin{proof}
We analyze the consumer choice problem under the utility function $U(\theta, q) = \theta q - p - \Phi(D_A)$.
The "Pollution Term" $\Phi(D_A)$ acts as a lump-sum tax on participation.

1.  \textbf{Single-Crossing Property (SCP):}
    The sorting condition relies on the cross-partial derivative of utility with respect to type $\theta$ and quality $q$:
    $$ \frac{\partial^2 U}{\partial q \partial \theta} = \frac{\partial}{\partial \theta} \left( \theta \right) = 1 > 0 $$
    The pollution term $\Phi(D_A)$ vanishes upon differentiation with respect to $q$ or $\theta$. Thus, the standard Spence-Mirrlees condition holds strictly. This implies that the indifference curves in the $(q, p)$ plane intersect exactly once.

2.  \textbf{Ordering of Choices:}
    Let the available products be $\{ (q_A, p_A), (q_H, p_H) \}$ with $q_H > q_A$ and $p_H > p_A$.
    The critical type $\hat{\theta}_{AH}$ indifferent between AI and Human content is defined by:
    $$ \hat{\theta}_{AH} q_A - p_A - \Phi = \hat{\theta}_{AH} q_H - p_H - \Phi $$
    $$ \hat{\theta}_{AH} = \frac{p_H - p_A}{q_H - q_A} $$
    The critical type $\hat{\theta}_{out}$ indifferent between AI and exit (Outside Option $U=0$) is:
    $$ \hat{\theta}_{out} q_A - p_A - \Phi = 0 \implies \hat{\theta}_{out} = \frac{p_A + \Phi(D_A)}{q_A} $$

3.  \textbf{Uniqueness:}
    Since $U$ is linear in $\theta$, these thresholds are unique.
    * If $\theta < \hat{\theta}_{out}$, consumer exits.
    * If $\hat{\theta}_{out} \le \theta < \hat{\theta}_{AH}$, consumer buys AI.
    * If $\theta \ge \hat{\theta}_{AH}$, consumer buys Human.
    The existence of pollution $\Phi > 0$ shifts $\hat{\theta}_{out}$ upward (forcing low-type exit) but does not alter the relative sorting between AI and Human goods ($\hat{\theta}_{AH}$).
\end{proof}


\section{Derivation of Gradient Flows}
\label{app:gradient_flow}

In this appendix, we establish the rigorous connection between the meso-scale Master Equation used in Section \ref{sec:theory} (the Fokker-Planck equation) and the minimization of the Free Energy functional (Lyapunov function). This provides the mathematical guarantee for the system's local stability.

\subsection{The Free Energy Functional}

We define the thermodynamic Free Energy $\mathcal{F}: \mathcal{P}(\Omega) \to \mathbb{R}$ of the system as the sum of the aggregate potential energy (negative profit) and the entropic energy:
\begin{equation} \label{eq:app_free_energy}
    \mathcal{F}[\mu] = \underbrace{\int_{\Omega} -\pi(s, \mu) \mu(s) \, ds}_{\mathcal{E}[\mu] \text{ (Energy)}} + \underbrace{\sigma \int_{\Omega} \mu(s) \ln \mu(s) \, ds}_{-\sigma \mathcal{S}[\mu] \text{ (Neg-Entropy)}}
\end{equation}
where $\pi(s, \mu)$ is the profit potential and $\sigma$ is the noise amplitude (inverse rationality).

\subsection{The Wasserstein Gradient Flow}

We seek to show that the evolution of the density $\mu_t$ follows the path of steepest descent of $\mathcal{F}$ with respect to the 2-Wasserstein metric $W_2$. The continuity equation for a gradient flow is:
\begin{equation} \label{eq:app_continuity}
    \frac{\partial \mu}{\partial t} = \nabla \cdot \left( \mu \nabla \frac{\delta \mathcal{F}}{\delta \mu} \right)
\end{equation}
where $\frac{\delta \mathcal{F}}{\delta \mu}$ is the Fr\'echet derivative (first variation) of the functional.

\textbf{Step 1: Compute the First Variation.}
We vary $\mathcal{F}$ with respect to $\mu$:
$$ \frac{\delta \mathcal{F}}{\delta \mu} = \frac{\delta}{\delta \mu} \left( -\int \pi \mu \, ds + \sigma \int \mu \ln \mu \, ds \right) $$
Assuming the dependence of $\pi$ on $\mu$ (congestion) is symmetric (potential game condition), the variation yields:
$$ \frac{\delta \mathcal{F}}{\delta \mu} = -\pi(s) + \sigma (\ln \mu(s) + 1) $$

\textbf{Step 2: Compute the Gradient of the Variation.}
Now we take the spatial gradient $\nabla_s$ of the variation:
$$ \nabla \left( \frac{\delta \mathcal{F}}{\delta \mu} \right) = \nabla (-\pi(s)) + \sigma \nabla (\ln \mu(s)) $$
$$ \nabla \left( \frac{\delta \mathcal{F}}{\delta \mu} \right) = -\nabla \pi(s) + \sigma \frac{\nabla \mu(s)}{\mu(s)} $$

\textbf{Step 3: Substitute into Continuity Equation.}
Substitute the result from Step 2 into Eq. \eqref{eq:app_continuity}:
$$ \frac{\partial \mu}{\partial t} = \nabla \cdot \left( \mu \left[ -\nabla \pi + \sigma \frac{\nabla \mu}{\mu} \right] \right) $$
Distributing $\mu$:
$$ \frac{\partial \mu}{\partial t} = \nabla \cdot \left( - \mu \nabla \pi + \sigma \nabla \mu \right) $$

This is exactly the **Fokker-Planck equation** (Eq. \ref{eq:fokker_planck}) governing the Layer 2 dynamics.

\subsection{Proof of H-Theorem (Dissipation)}

To prove stability, we show that $\mathcal{F}$ is a Lyapunov function. Taking the time derivative of $\mathcal{F}$ along the trajectory $\mu_t$:
$$ \frac{d}{dt} \mathcal{F}(\mu_t) = \int_{\Omega} \frac{\delta \mathcal{F}}{\delta \mu} \frac{\partial \mu}{\partial t} \, ds $$
Integration by parts (assuming no-flux boundary conditions) yields:
$$ \frac{d}{dt} \mathcal{F}(\mu_t) = - \int_{\Omega} \left| \nabla \frac{\delta \mathcal{F}}{\delta \mu} \right|^2 \mu_t(s) \, ds $$
Since the integrand is non-negative and $\mu_t \ge 0$, it follows that:
$$ \frac{d}{dt} \mathcal{F}(\mu_t) \le 0 $$

This confirms that the system monotonically relaxes towards the equilibrium distribution $\mu^*$ defined by $\nabla \frac{\delta \mathcal{F}}{\delta \mu} = 0$, which implies the Gibbs measure:
$$ \mu^*(s) = \frac{1}{Z} \exp\left( \frac{\pi(s)}{\sigma} \right) $$
Thus, the "Polluted Equilibrium" is a stable attractor of the market dynamics.

\bibliography{pollution} 
\bibliographystyle{plainnat} 

\end{document}